\newcommand{\re}[1]{\xrightarrow{#1}}
\newcommand{\N}{\mathbb N}
\newcommand{\Z}{\mathbb Z}
\newcommand{\STI}{\mathsf{SumToInfinity}}
\newcommand{\InfOcc}{\mathsf{InfOcc}}
\newcommand{\FinOcc}{\mathsf{FinOcc}}
\newcommand{\boldclass}[3]{\boldsymbol{#1}^{#2}_{#3}}
\newcommand{\bsigma}[1]{\boldclass{\Sigma}{0}{#1}}
\newcommand{\bpi}[1]{\boldclass{\Pi}{0}{#1}}
\newcommand{\bdelta}[1]{\boldclass{\Delta}{0}{#1}}
\newcommand{\lex}{\text{lex}}
\newcommand{\word}{x}
\title{A positional \texorpdfstring{$\boldsymbol{\Pi}^{0}_{3}$}{Π⁰₃}-complete objective}
\author{Antonio Casares}{University of Warsaw, Poland}{antoniocasares@mimuw.edu.pl}{https://orcid.org/0000-0002-6539-2020}{}
\author{Pierre Ohlmann}{CNRS, Laboratoire d'Informatique et des Systèmes, Marseille, France}{pierre.ohlmann@lis-lab.fr}{https://orcid.org/0000-0002-4685-5253}{}
\author{Pierre Vandenhove}{LaBRI, Université de Bordeaux, France}{pierre.vandenhove@umons.ac.be}{https://orcid.org/0000-0001-5834-1068}{}
\authorrunning{A. Casares, P. Ohlmann, and P. Vandenhove}
\keywords{Infinite-duration games, positionality, Borel hierarchy, total-payoff objectives}
\tikzset{
  circ/.style={draw,circle,minimum height=7mm}
}
\definecolor{darkblue}{rgb}{0.0, 0.0, 0.55}
\newcommand{\blue}[1]{{\color{darkblue}#1}}
\definecolor{darkred}{rgb}{0.65, 0.0, 0.0}
\newcommand{\red}[1]{{\color{darkred}#1}}
\begin{document}

\maketitle

\begin{abstract}
We study zero-sum, turn-based games on graphs.
In this note, we show the existence of a game objective that is $\bpi{3}$-complete for the Borel hierarchy and that is \emph{positional}, i.e., for which positional strategies suffice for the first player to win over game graphs of arbitrary cardinality.
To the best of our knowledge, this is the first known such objective; all previously known positional objectives are in~$\bsigma{3}$.
The objective in question is a qualitative variant of the well-studied \emph{total-payoff objective}, which deals with the behaviour of the sequence of partial sums of weights seen along plays.
\end{abstract}

\section{Context and contribution} \label{sec:intro}

We consider zero-sum, turn-based, two-player, infinite-duration games played on (possibly infinite) graphs~\cite{NathBook}.
In this context, two players, Eve and Adam, take turns moving a token along the edges of an edge-labelled directed game graph.
Throughout, we assume that graphs do not have dead-ends: every vertex has at least one outgoing edge.
This interaction thus produces an infinite path whose edge labels form an infinite word $\word$ in $C^\omega$, where $C$ is a countable, possibly infinite alphabet of labels.
An \emph{objective} $W \subseteq C^\omega$ is specified in advance; Eve wins the game if the word $\word$ belongs to~$W$; otherwise, Adam wins.

A strategy is called \emph{positional} (or memoryless) if it chooses the next move as a function only of the current vertex, regardless of past moves.
An objective is called \emph{positional}\footnote{The literature sometimes uses ``half-positional'' for this notion, since there is a requirement on Eve's strategy complexity, but not on Adam's.} if, in every game graph (of arbitrary cardinality), Eve has a positional strategy winning from every vertex from which she has a winning strategy.

To the best of our knowledge, all previously known positional objectives belong to $\bsigma{3}$, the (existential) third level of the Borel hierarchy in the usual product topology\footnote{We recall that the open sets of this topology are those of the form $LC^\omega$, for $L\subseteq C^*$ a set of finite words.} over $C^\omega$.
For instance, the positionality of the $\omega$-regular objectives is well understood~\cite{CO26}, but they all lie in $\bdelta{3} = \bsigma{3} \cap \bpi{3}$ (as shown in~\cite{BL69Definability}).
There are additional examples stemming from characterisations of (subclasses of) objectives in $\bsigma{1}$, $\bpi{1}$, and $\bsigma{2}$: see, respectively, \cite{BFRV23}, \cite{CFH24}, and~\cite{OS26}.
The following natural $\bsigma{3}$-complete objective is also shown to be positional in~\cite{COV26PositionalityMemory}:
\[\InfOcc = \{\word \in \N^\omega \mid \exists c\in\N, |\word|_c = \infty\},\] where $|\word|_c$ denotes the number of occurrences of $c$ in $\word$.
The objective $\InfOcc$ is thus the set of words in which some number occurs infinitely often.
However, its complement, which is a $\bpi{3}$-complete objective, is not positional --- to see it, consider a game graph with a single vertex where Eve has to choose among infinitely many self-loops, each labelled with a different number $c\in\N$.
This leads to the following question: does there exist a positional objective that does not belong to $\bsigma{3}$?

We answer this question positively by showing that the following $\bpi{3}$-complete objective over $C = \Z$ is positional:
\[
\STI = \{w_0w_1\ldots \in \Z^\omega \mid \lim_{k\to \infty}\sum_{i=0}^{k-1} w_i = +\infty\}.\]

This objective is a qualitative variant of a \emph{total-payoff objective} (also called a \emph{total-reward objective}), where the goal is usually to maximise the $\limsup$ or $\liminf$ of the sequence of partial sums of weights.
Total-payoff objectives are positional over \emph{finite} arenas~\cite{GZ04}.
However, over infinite arenas, they are in general not positional.
Various classes of strategies are needed, depending on the variant considered ($\limsup$ or $\liminf$, and with a rational, $+\infty$, or $-\infty$ threshold) and the class of arenas~\cite{OS26,BIPTV24}.
Objective $\STI$ is a natural variant that turns out to have a remarkably low strategy complexity even over the most general class of arenas.
Our results fill a gap in the understanding of quantitative objectives~\cite{BIPTV24}.

Note that $\STI$ is prefix-independent (i.e., for all $\word\in\Z^*$ and $\word'\in\Z^\omega$, $\word'\in\STI$ if and only if $\word\word'\in\STI$).

\begin{theorem}\label{thm:main}
	The objective $\STI$ is $\bpi{3}$-complete and positional.
\end{theorem}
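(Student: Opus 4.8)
The plan is to treat the two assertions separately: the topological one is routine, while positionality is the crux.

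For $\bpi 3$-completeness I would first establish membership by putting the objective in prenex form. Writing the partial sums $S_k = \sum_{i=0}^{k-1} w_i$, one has $\word \in \STI$ iff $\forall N\,\exists K\,\forall k\ge K\; S_k \ge N$. Each condition $S_k \ge N$ is clopen; fixing $N,K$ and intersecting over $k\ge K$ yields a closed set; taking the union over $K$ yields a $\bsigma 2$ set; and intersecting over $N$ yields a $\bpi 3$ set, so $\STI \in \bpi 3$. For hardness I would reduce from the complement $\FinOcc = \{\word : \forall c,\ |\word|_c < \infty\}$ of $\InfOcc$, which the excerpt already identifies as $\bpi 3$-complete. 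The key observation is that $\FinOcc$ is exactly the set of sequences over $\N$ tending to $+\infty$: if every value occurs finitely often then, for each $M$, only finitely many positions carry a value $\le M$. I would then use the continuous ``difference'' map $f\colon \N^\omega \to \Z^\omega$ given by $f(\word)_i = \word_i - \word_{i-1}$ (with $\word_{-1}=0$); its partial sums telescope to $\sum_{i=0}^{k-1} f(\word)_i = \word_{k-1}$, so $f(\word)\in\STI$ iff $\word_j \to +\infty$ iff $\word\in\FinOcc$. This gives $\bpi 3$-hardness, hence completeness.

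For positionality, the governing intuition is a phase/``climb'' decomposition: a strategy that, from its current position, repeatedly forces the running sum to gain one more unit over the best level it has so far secured, while never dropping below that secured level, produces a play whose partial sums pass and then stay above every integer, hence tend to $+\infty$. The subtlety --- and the reason the result is not immediate --- is that there need be no uniform lower bound on how far Adam can first push the sum down: Adam may force arbitrarily deep (but finite) dips before Eve recovers, so a naive energy floor of $0$ fails and the ``secured level'' must be allowed to be vertex-dependent and, a priori, unbounded below. I would therefore make the argument uniform through Ohlmann's characterization of positionality over arenas of arbitrary cardinality: since $\STI$ is prefix-independent, it suffices to exhibit, for every cardinal $\kappa$, a well-founded monotone graph that is $\STI$-universal.

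Concretely, I would build a universal graph whose vertices carry two coordinates: an integer ``level'' recording a guaranteed lower bound on the accumulated sum, and an ordinal ``rank'' measuring the remaining budget before the level is next forced to strictly increase. Vertices are well-ordered by rank first, which yields a genuine well-order despite the weights ranging over all of $\Z$. The edges labelled $c$ are the monotone threshold edges dictated by an energy-style inequality on levels (a step of weight $c$ cannot raise the level by more than $c$), together with the requirement that each step either strictly increases the level or strictly decreases the rank. Along any infinite path the rank cannot decrease forever, so the level strictly increases infinitely often while the energy inequality forbids the sum from falling back below an already-secured level; this forces the partial sums to $+\infty$, i.e.\ every infinite path lies in $\STI$. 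For universality I would assign to each vertex of an arbitrary $\Z$-labelled graph all of whose infinite paths lie in $\STI$ its level (the best securable lower bound) together with a co-Büchi-style ordinal rank counting, for the relevant threshold, how long a path can still dip below it; monotonicity of the threshold edges makes this assignment a graph morphism.

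The main obstacle I expect is precisely this construction and its verification: reconciling well-foundedness of the ordinal ranks with the two-sided $\Z$-weights that allow the sum to dip arbitrarily deeply (the phenomenon flagged above), and checking that the co-Büchi-style rank procedure is well-defined and monotone so that every $\STI$-graph genuinely embeds. Once the universal graph is in place, positionality of $\STI$ follows directly from Ohlmann's criterion, and together with the first paragraph this establishes Theorem~\ref{thm:main}.
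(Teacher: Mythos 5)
Your completeness half is correct and is essentially the paper's argument: the same prenex form $\forall n\,\exists j\,\forall k\ge j\;(S_k\ge n)$ (writing $S_k$ for the $k$-th partial sum), with clopen inner sets, gives membership in $\bpi 3$, and your difference map is the paper's reduction from $\FinOcc$ up to reindexing; your remark that $\FinOcc$ over $\N$ is exactly the set of sequences tending to $+\infty$ is a clean way to run the verification.

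For positionality you choose the right frame (prefix-independence plus Ohlmann's criterion, so it suffices to build, for each cardinal $\kappa$, a well-ordered --- not merely well-founded --- monotone almost-universal graph), but the graph you sketch fails, and it fails precisely at the point you flag as the main obstacle. With one integer level $\ell$ and one ordinal rank $\alpha$ per vertex, the rank must be allowed to reset upwards when the level strictly increases: if it could never increase, it would be eventually constant along any infinite path, and then every later step would have to raise the level, so no play with infinitely many non-level-raising steps could embed --- already the $\STI$-word $(2,-1,2,-1,\dots)$, each of whose $-1$-steps forces a strict rank decrease (a step of weight $-1$ cannot raise the level), would admit no morphism into your graph, ruining universality. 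But once resets on level increases are permitted, your edge rule admits the zero-sum cycle
\[
(\ell=0,\alpha=999) \re{+1} (\ell=1,\alpha=\omega) \re{-1} (\ell=0,\alpha=999),
\]
both steps being legal (the first strictly increases the level, the second strictly decreases the rank), so your graph contains an infinite path with bounded partial sums and does not satisfy $\STI$, ruining the other half of almost-universality.

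The missing idea is that a single ordinal cannot serve all thresholds at once. The paper's universal graph takes as vertices finite \emph{tuples} of ordinals below $\kappa$, the length of the tuple playing the role of your level and the $k$-th coordinate being a separate co-B\"uchi-style rank, one for each threshold $k$ up to the current level; an edge $u \re{w} u'$ requires $|u|+w\ge|u'|$ and, in the tight case $|u|+w=|u'|$, additionally $u >_\lex u'$, so that even a level-raising tight step must strictly decrease the rank of some lower threshold. This is what kills the cycle above (around any purported zero-sum loop, the first coordinate strictly decreases) and yields Lemma~\ref{lem:satisfyW}; on the embedding side, your ``rank for the relevant threshold'' is the paper's $\rk(T_{v,k})$ for a single fixed $k$, whereas the morphism property in the tight case needs all coordinates $k \le n(v)$ simultaneously. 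Finally, you overclaim by asking that every $\STI$-graph embed: full universality is unattainable here, since the root of an $\STI$-tree may have unbounded dips among its (individually winning) branches, i.e., $n(\text{root})=+\infty$, leaving no tuple length to assign to it; the paper proves only almost-universality, embedding $T[v_0]$ for some vertex $v_0$ from which every nonempty path has positive weight, which is exactly what Ohlmann's criterion for prefix-independent objectives requires.
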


The rest of the note is devoted to the proof of Theorem~\ref{thm:main}.
We quickly show in Section~\ref{sec:completeness} that $\STI$ is $\bpi{3}$-complete; our main contribution, in Section~\ref{sec:positionality}, is a positionality proof based on constructing \emph{almost-universal graphs} for $\STI$, and applying~\cite[Theorem~3.2]{Ohlmann23}.

A natural open question is whether every level of the Borel hierarchy admits a complete objective that is positional.

\section{\texorpdfstring{$\boldsymbol{\Pi}^{0}_{3}$}{Π⁰₃}-completeness of \texorpdfstring{$\mathsf{SumToInfinity}$}{SumToInfinity}} \label{sec:completeness}
We refer to~\cite{kechris1995Descriptive} for definitions of the Borel hierarchy.

To show that $\STI$ is in $\bpi{3}$, observe that
\[
	\STI = \bigcap_{n\in\N} \bigcup_{j \in \N} \bigcap_{k \ge j} \{w_0w_1\ldots\in\Z^\omega \mid \sum_{i=0}^{k-1} w_i \ge n\},
\]
where the inner sets $\{w_0w_1\ldots\in\Z^\omega \mid \sum_{i=0}^{k-1} w_i \ge n\}$ are clopen.

To show that $\STI$ is $\bpi{3}$-hard, we reduce the following $\bpi{3}$-hard objective~\cite[Ex.~23.2]{kechris1995Descriptive} to it:
\[ \FinOcc = \{ \word \in \N^\omega \mid \forall c\in \N, |\word|_c \text{ is finite} \}. \]

This objective is the complement of the objective $\InfOcc$ discussed above.
We recall that for a reduction from $\FinOcc$ to $\STI$, we need to show a continuous mapping $f\colon \N^\omega \to \Z^\omega$ such that $f^{-1}(\STI) = \FinOcc$. Such a mapping is defined by
\[ f(c_0c_1\dots) = w_0w_1\dots, \;\; \text{ with } w_i = c_{i+1} - c_i. \]

The function $f$ is continuous because if $\word,\word' \in \N^\omega$ are two words with a common prefix of length~$n+1$, then $f(\word)$ and $f(\word')$ share a prefix of length~$n$.

Let us show that $f^{-1}(\STI) = \FinOcc$. Note that
$\sum_{i=0}^{k-1} (c_{i+1}-c_i) = c_k - c_0$.
If $c_0c_1\ldots \notin \FinOcc$, then there is $d\in \N$ such that $c_k = d$ for infinitely many $k$. Thus, $c_k-c_0$ takes the same value $d - c_0\in \Z$ for infinitely many $k$. Therefore, $\sum_{i=0}^{k-1} (c_{i+1}-c_i) \nrightarrow +\infty$.
Conversely, if $c_0c_1\ldots \in \FinOcc$, then, for all $c\in \N$, $c_k-c_0 >c$ for all sufficiently large $k$, so $\sum_{i=0}^{k-1} (c_{i+1}-c_i) \to +\infty$.

\section{Positionality of \texorpdfstring{$\mathsf{SumToInfinity}$}{SumToInfinity}} \label{sec:positionality}

As explained in Section~\ref{sec:intro}, we prove Theorem~\ref{thm:main} using the theory of \emph{universal graphs}. To keep this note short, we include only the crucial formal definitions of universal graphs; we refer the reader to~\cite{Ohlmann23} for additional context.

In what follows, the word \emph{graph} stands for a directed graph with edges labelled by elements of~$C$, with no dead-end: every vertex has an outgoing edge.
The set of vertices of a graph $G$ is written $V(G)$, and its set of edges is written $E(G) \subseteq V(G) \times C \times V(G)$.
An edge from vertex $v$ to vertex $v'$ labelled by $c$ is written $v \re{c} v'$.

\subparagraph*{Almost-universality.}
A \emph{tree} is a graph with a root vertex $t_0$ such that all vertices admit a unique finite directed path from $t_0$.
A \emph{graph morphism} from $G$ to $H$ is a map $f\colon V(G) \to V(H)$ such that for every edge $v \re{c} v'$ in $G$, $f(v) \re{c} f(v')$ is an edge in $H$.
We write $G \to H$ if there exists such a morphism.
A \emph{well-ordered graph} is a graph equipped with a well-order~$\le$ on its vertices.
An ordered graph is \emph{monotone} if $u \geq v \re{c} v' \geq u'$ implies $u \re{c} u'$.
A graph \emph{satisfies} an objective $W$ if the labelling of every infinite path of the graph belongs to $W$.
For a prefix-independent objective $W$ and a cardinal $\kappa$, a graph $U$ is said to be \emph{$(\kappa,W)$-almost-universal} if
\begin{itemize}
    \item $U$ satisfies $W$, and
    \item for all trees $T$ of size $<\kappa$ satisfying $W$, there is a vertex $v_0$ such that $T[v_0] \to U$,
\end{itemize}
where $T[v_0]$ denotes the subtree induced by $v_0$ and its descendants.

The proof of Theorem~\ref{thm:main} will rely on the following result.
\begin{lemma}[{\cite[Theorem~3.2 and Lemma~4.5]{Ohlmann23}}] \label{lem:almost-universal-positional}
	Let $W$ be a prefix-independent objective.
    If, for all cardinals $\kappa$, there exists a well-ordered monotone $(\kappa,W)$-almost-universal graph, then~$W$ is positional.
\end{lemma}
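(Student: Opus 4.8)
The plan is to fix an arbitrary arena $\mathcal{A}$, consider Eve's winning region $W_E \subseteq V(\mathcal{A})$, and produce a single positional strategy winning from every vertex of $W_E$. The guiding observation is that a morphism into a graph satisfying $W$ is a \emph{soundness certificate}: if $G \to H$ and $H$ satisfies $W$, then every infinite path of $G$ maps to an infinite path of $H$ carrying the same labelling, so $G$ satisfies $W$ too. The whole proof consists in manufacturing, out of the winning region, a graph that maps into the almost-universal graph $U$ and from which a positional strategy can be read off.

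First I would record the winning plays. For each $v \in W_E$ fix a strategy $\sigma_v$ winning from $v$, and let $T$ be the disjoint union of the play-trees $T_v$ --- whose branches are the infinite plays from $v$ consistent with $\sigma_v$ --- glued under a fresh common root by dummy edges. Every infinite branch of $T$ is a winning play, and since $W$ is prefix-independent the prepended root edge is harmless, so $T$ satisfies $W$. Choosing a cardinal $\kappa > |T|$ and invoking the hypothesis yields a well-ordered monotone $(\kappa,W)$-almost-universal graph $U$, hence a vertex $v_0$ and a morphism $\phi\colon T[v_0] \to U$.

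Next comes the extraction of the positional strategy via the two structural features of $U$. Each arena-vertex $x$ may occur as many nodes of $T[v_0]$; using the \emph{well-ordering}, set $\rho(x)$ to be the least value $\phi(n)$ over all nodes $n$ with underlying vertex $x$, realised by some node $n_x$. At an Eve vertex $x$, the strategy governing $n_x$ picks an edge $x \re{c} x'$, and the morphism gives $\rho(x) = \phi(n_x) \re{c} \phi(m) \geq \rho(x')$ in $U$; \emph{monotonicity}, applied with $\rho(x) \geq \rho(x) \re{c} \phi(m) \geq \rho(x')$, upgrades this to the edge $\rho(x) \re{c} \rho(x')$ of $U$. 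I would let the positional strategy play exactly this edge. At an Adam vertex $x$, every outgoing edge $x \re{c} x'$ of $\mathcal{A}$ is a child of $n_x$ in $T$ (all of Adam's moves appear), so the same computation yields $\rho(x) \re{c} \rho(x')$ for \emph{all} of Adam's choices. Consequently any play consistent with this strategy induces an infinite path $\rho(x_0) \re{c_0} \rho(x_1) \re{c_1} \cdots$ in $U$ with the very same labelling; as $U$ satisfies $W$, the play is won. Thus $\rho$ turns the monotone graph $U$ into a ``ranking'' witnessing a positional winning strategy.

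The step I expect to be the genuine obstacle --- and the reason the statement invokes both Theorem~3.2 and Lemma~4.5 of~\cite{Ohlmann23} --- is the mismatch hidden in the word \emph{almost}: almost-universality only provides a morphism from the subtree $T[v_0]$, whereas $\rho$ must be defined on all of $W_E$ for the argument above to cover every starting vertex and every escape of Adam. Here I would again lean on prefix-independence and build $T$ in a \emph{saturated} form, re-spawning below every node a fresh copy of each $T_v$. Because prefixes do not affect membership in $W$, this enlarged tree still satisfies $W$, yet now every $v_0$ has among its descendants a complete copy of the plays from each winning vertex, so $\phi$ on $T[v_0]$ already covers all of $W_E$ and is closed under the moves used above. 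Checking that this saturation keeps the tree winning and of size $<\kappa$, and that Adam genuinely cannot leave $W_E$ (so that $\rho$ remains defined along every play), is the delicate bookkeeping, but it is exactly what upgrades the almost-universal graph into a full positionality witness.
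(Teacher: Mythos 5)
The paper does not prove this lemma itself (it is imported from~\cite{Ohlmann23}), so your attempt must be measured against the proof given there. The first two thirds of your proposal correctly reconstruct the proof of Theorem~3.2 of~\cite{Ohlmann23}: build the strategy trees $T_v$, embed them into $U$, define $\rho(x)$ as the $\le$-least image of an occurrence of the arena vertex $x$ (here the well-order is used), and use monotonicity ($u \geq v \re{c} v' \geq u'$ implies $u \re{c} u'$) to turn the pointwise choices at the minimising nodes $n_x$ into a single positional strategy whose consistent plays project to paths of $U$. That part is sound, modulo standard details you correctly wave at (e.g., Adam cannot leave the winning region from a winning Adam vertex, which itself uses prefix-independence).

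The genuine gap is exactly where you place your confidence: the ``saturation'' step. Grafting a fresh copy of every $T_v$ below \emph{every} node destroys the hypothesis that the tree satisfies $W$. An infinite branch of the saturated tree may descend through infinitely many grafted copies, so it is an infinite concatenation of \emph{finite} pieces, each of which is merely a finite prefix of some winning play. Prefix-independence licenses deleting or prepending one finite prefix; it says nothing about infinite concatenations of prefixes. Concretely, for $W = \STI$ itself: a winning play from $v$ may start with weight $-1$, so each grafted piece can contribute partial sum $-1$, and a branch hopping through infinitely many copies has partial sums tending to $-\infty$. The saturated tree then fails $\STI$, almost-universality yields no morphism at all, and your $\rho$ is never defined --- so the ``delicate bookkeeping'' you defer is precisely the point at which the argument collapses. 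The actual repair (the content of Lemma~4.5 of~\cite{Ohlmann23}) needs an ordinal-layering idea you do not have: one upgrades an almost-universal $U$ to a universal graph by transfinitely \emph{peeling} the tree --- almost-universality gives some $t_\alpha$ with $T'[t_\alpha] \to U$, one removes this descendant-closed set, and the ancestor-closed remainder is again a tree satisfying $W$, so the process iterates --- and then maps $T$ into the graph $\kappa \cdot U$ (lexicographically ordered layers of copies of $U$, with edges allowed to strictly decrease the layer), which is still well-ordered and monotone and satisfies $W$ because the layer is non-increasing along any path and $W$ is prefix-independent on the stabilised tail. Alternatively one can run a transfinite attractor decomposition of the winning region, gluing positional strategies layer by layer. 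Either way, some well-founded induction replaces your saturation; without it the statement's ``almost'' is not bridged.
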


By the prefix-independence of $\STI$ observed above, to prove its positionality, it therefore suffices to build a $(\kappa, \STI)$-almost-universal graph $U$ for every cardinal $\kappa$.
In what follows, let $C = \Z$ and let $\kappa$ be an infinite cardinal.

\subparagraph*{Definition of $U$.}
We identify $\kappa$ with its initial ordinal (i.e., the smallest ordinal of cardinality~$\kappa$) and consider finite tuples of ordinals below $\kappa$, i.e., elements of $\bigcup_{n<\omega} \kappa^n$.
For such a tuple $u$, we let $|u|$ denote the length of $u$; for instance, $|(0,1)|=2$.
For $i\le |u|$, we write~$u_{<i}$ for the restriction of~$u$ to its first $i$ coordinates:
\[
    (u_0,\dots,u_n)_{<i} = (u_0,\dots,u_{i-1}).
\]
Recall the lexicographic ordering:
\[
    u >_\lex u' \iff u' \text{ is a proper prefix of $u$, or } \exists i < \min\{|u|,|u'|\}, [u_{<i} = u'_{<i} \text{ and } u_i > u'_i].
\]

Consider the graph $U$ defined over $V(U)= \bigcup_{n<\omega} \kappa^n$ by
\[
    E(U) = \{u \re{w} u' \mid w\in\Z,\ |u| + w \geq |u'|, \text{ and } [|u| + w = |u'| \implies u >_\lex u']\}.
\]
Intuitively, the length of the tuples in $U$ encodes an underapproximation of the sum of weights along a given path: an edge either tracks precisely the sum of weights (when $|u| + w = |u'|$), or it underestimates it (when $|u| + w > |u'|$).
In the former case, there is the additional requirement that the tuple must stricly decrease, in the sense that $u >_\lex u'$.
In the latter case, the ordinals in the tuple can have any value.
These rules prevent in particular the existence of cycles with sum of weights $0$ in~$U$; to go back to the same vertex, some strict increase in the sum of weights is necessary.
Note that $U$ does not have dead-ends because, for every vertex~$u$, there is an edge $u \re{1} u$.

The order on $U$ is then defined by
\[
    u > u' \iff |u|>|u'| \text{ or } [|u|=|u'| \text{ and } u >_\lex u'].
\]
We define the non-strict order $\leq$ in the usual way: $u \leq u'$ if and only if $u = u'$ or $u < u'$.
We raise the reader's attention to the fact that the order on $U$ does not coincide with the lexicographic order: for instance, $(0,0) > (1)$ in $U$.
Also, the order $\le_\lex$ is not a well-order on $V(U)$ (since, for instance, $(1) >_\lex (0,1) >_\lex (0,0,1) >_\lex \dots$ is an infinite descending sequence), but the order $\leq$ is a well-order on $V(U)$.

\begin{lemma} \label{lem:well-ordered-monotone}
	The graph $(U,\leq)$ is a well-ordered monotone graph.
\end{lemma}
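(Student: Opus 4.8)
The plan is to verify the two defining properties separately: that $\leq$ well-orders $V(U)$, and that the ordered graph is monotone. The key structural observation about $\leq$ is that it orders tuples primarily by length and, within a fixed length, by the lexicographic order; thus $(V(U),\leq)$ is the ordinal sum, indexed by $n<\omega$, of the blocks $(\kappa^n,<_\lex)$, with shorter tuples preceding longer ones.

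For well-ordering, I would argue directly. First I would observe that on a fixed length $n$ the order $<_\lex$ restricted to $\kappa^n$ is the ordinary lexicographic order (the prefix clause is vacuous among tuples of equal length, since a prefix of the same length is equal), and that a finite lexicographic product of well-orders is again a well-order. Then, given a nonempty $S\subseteq V(U)$, I would take the least length $n_0$ occurring in $S$ (using that $\N$ is well-ordered), and then the $<_\lex$-least element among the members of $S$ of length $n_0$; by the length-then-lex description this element is the $\leq$-minimum of $S$.

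The substance is monotonicity: assuming $u \geq v \re{c} v' \geq u'$, I must produce the edge $u \re{c} u'$, i.e.\ check $|u| + c \geq |u'|$ and, when equality holds, $u >_\lex u'$. The first inequality follows from a single chain: $u \geq v$ gives $|u| \geq |v|$, the edge gives $|v| + c \geq |v'|$, and $v' \geq u'$ gives $|v'| \geq |u'|$, so $|u| + c \geq |v| + c \geq |v'| \geq |u'|$. The crucial point is that if $|u| + c = |u'|$, this chain collapses: every inequality must be an equality, forcing $|u| = |v|$ and $|v'| = |u'|$. Under $|u| = |v|$ the hypothesis $u \geq v$ yields $u \geq_\lex v$; under $|v| + c = |v'|$ the edge $v \re{c} v'$ yields $v >_\lex v'$; under $|v'| = |u'|$ the hypothesis $v' \geq u'$ yields $v' \geq_\lex u'$. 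Chaining these, $u \geq_\lex v >_\lex v' \geq_\lex u'$, and since $<_\lex$ is the strict linear (hence transitive) lexicographic order on finite sequences, I conclude $u >_\lex u'$, as required.

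The main obstacle I anticipate is bookkeeping rather than ideas: one must keep in mind that $<_\lex$ and the $U$-order genuinely differ (as the authors warn), so each step translates a $U$-order hypothesis into a lexicographic statement only after the relevant lengths have been equated, and one must confirm that $<_\lex$ remains transitive across tuples of differing lengths, where the prefix clause is active. The equality-propagation observation --- that tightness of $|u| + c \geq |u'|$ forces all four tuples into a single length-aligned configuration --- is what makes the lexicographic transitivity applicable and is the heart of the argument.
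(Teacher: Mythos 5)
Your proof is correct and takes essentially the same approach as the paper's: the monotonicity argument via the chain $|u|+w \geq |v|+w \geq |v'| \geq |u'|$, with equality throughout forcing $|u|=|v|$ and $|v'|=|u'|$ and hence $u \geq_\lex v >_\lex v' \geq_\lex u'$, is exactly the paper's proof. The well-ordering part, which the paper dismisses as immediate, you correctly flesh out as the ordinal sum over lengths $n<\omega$ of the lexicographic well-orders on the blocks $\kappa^n$.
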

\begin{proof}
	It is immediate that the order on $U$ is well-founded and total.
	Let us check that~$U$ is monotone. Let $u, v, u', v' \in V(U)$ and $w\in\Z$ such that $u \geq v \re{w} v' \geq u'$ in~$U$.
	Then, $|u|+w\geq |v| + w \geq |v'| \geq |u'|$.
	If one of these inequalities is strict, then $|u|+w >|u'|$.
	Otherwise, $|u|+w = |u'|$ and $u \geq_\lex v >_\lex v' \geq_\lex u'$.
	In both cases, we conclude that~$u\re{w} u'$.
\end{proof}

\begin{figure}[t]
	\centering
	\begin{tikzpicture}[every node/.style={font=\small,inner sep=1pt}]
		\draw (0,0) node[circ] (v0) {\red{$0$}};
		\draw (v0.west) node[left=2pt,font=\scriptsize] {\blue{$(1)$}};
		\draw (v0.north west) node[above left=1.4pt] {$v_0$};
		\draw ($(v0)+(-1,-1.4)$) node[circ] (v1) {\red{$2$}};
		\draw (v1.north west) node[above left=1.4pt] {$v_1$};
		\draw (v1.west) node[left=2pt,font=\scriptsize] {\blue{$(0,1,4)$}};
		\draw ($(v1)+(1.5,0)$) node[circ] (v11) {\red{$1$}};
		\draw (v11.north) node[above=2pt,font=\scriptsize] {\blue{$(0,1)$}};
		\draw ($(v11)+(1.5,0)$) node[circ] (v12) {\red{$2$}};
		\draw (v12.north) node[above=2pt,font=\scriptsize] {\blue{$(0,0,1)$}};
		\draw ($(v12)+(1.5,0)$) node[circ] (v13) {\red{$3$}};
		\draw (v13.north) node[above=2pt,font=\scriptsize] {\blue{$(0,0,0,1)$}};
		\draw ($(v13)+(1.5,0)$) node[circ,draw=none] (v14) {$\cdots$};
		\draw ($(v1)+(-1,-1.4)$) node[circ] (v2) {\red{$3$}};
		\draw (v2.west) node[left=2pt,font=\scriptsize] {\blue{$(0,1,3,6)$}};
		\draw ($(v2)+(1.5,0)$) node[circ] (v21) {\red{$2$}};
		\draw (v21.north) node[above=2pt,font=\scriptsize] {\blue{$(0,1,3)$}};
		\draw ($(v21)+(1.5,0)$) node[circ] (v22) {\red{$1$}};
		\draw (v22.north) node[above=2pt,font=\scriptsize] {\blue{$(0,1)$}};
		\draw ($(v22)+(1.5,0)$) node[circ] (v23) {\red{$2$}};
		\draw (v23.north) node[above=2pt,font=\scriptsize] {\blue{$(0,0,1)$}};
		\draw ($(v23)+(1.5,0)$) node[circ,draw=none] (v24) {$\cdots$};
		\draw ($(v2)+(-1,-1.4)$) node[circ] (v3) {\red{$4$}};
		\draw (v3.west) node[left=2pt,font=\scriptsize] {\blue{$(0,1,3,5,8)$}};
		\draw ($(v3)+(1.5,0)$) node[circ] (v31) {\red{$3$}};
		\draw (v31.north) node[above=2pt,font=\scriptsize] {\blue{$(0,1,3,5)$}};
		\draw ($(v31)+(1.5,0)$) node[circ] (v32) {\red{$2$}};
		\draw (v32.north) node[above=2pt,font=\scriptsize] {\blue{$(0,1,3)$}};
		\draw ($(v32)+(1.5,0)$) node[circ] (v33) {\red{$1$}};
		\draw (v33.north) node[above=2pt,font=\scriptsize] {\blue{$(0,1)$}};
		\draw ($(v33)+(1.5,0)$) node[circ] (v34) {\red{$2$}};
		\draw (v34.north) node[above=2pt,font=\scriptsize] {\blue{$(0,0,1)$}};
		\draw ($(v34)+(1.5,0)$) node[circ,draw=none] (v35) {$\cdots$};
		\draw ($(v3)+(-1,-1.4)$) node[circ] (v4) {\red{$5$}};
		\draw (v4.west) node[left=2pt,font=\scriptsize] {\blue{$(0,1,3,5,7,10)$}};
		\draw ($(v4)+(1.5,0)$) node[circ] (v41) {\red{$4$}};
		\draw (v41.south) node[below=2pt,font=\scriptsize] {\blue{$(0,1,3,5,7)$}};
		\draw ($(v41)+(1.5,0)$) node[circ] (v42) {\red{$3$}};
		\draw (v42.south) node[below=2pt,font=\scriptsize] {\blue{$(0,1,3,5)$}};
		\draw ($(v42)+(1.5,0)$) node[circ] (v43) {\red{$2$}};
		\draw (v43.south) node[below=2pt,font=\scriptsize] {\blue{$(0,1,3)$}};
		\draw ($(v43)+(1.5,0)$) node[circ] (v44) {\red{$1$}};
		\draw (v44.south) node[below=2pt,font=\scriptsize] {\blue{$(0,1)$}};
		\draw ($(v44)+(1.5,0)$) node[circ] (v45) {\red{$2$}};
		\draw (v45.south) node[below=2pt,font=\scriptsize] {\blue{$(0,0,1)$}};
		\draw ($(v45)+(1.5,0)$) node[circ,draw=none] (v46) {$\cdots$};
		\draw ($(v4)+(-1,-1.4)$) node[circ,draw=none,rotate=54.46] (v5) {$\dots$};

		\draw (v0) edge[-latex] node[above left] {$2$} (v1);
		\draw (v1) edge[-latex] node[above left] {$1$} (v2);
		\draw (v2) edge[-latex] node[above left] {$1$} (v3);
		\draw (v3) edge[-latex] node[above left] {$1$} (v4);
		\draw (v4) edge[-latex] node[above left] {$1$} (v5);

		\draw (v1) edge[-latex] node[above=2pt] {$-1$} (v11);
		\draw (v11) edge[-latex] node[above=2pt] {$1$} (v12);
		\draw (v12) edge[-latex] node[above=2pt] {$1$} (v13);
		\draw (v13) edge[-latex] node[above=2pt] {$1$} (v14);

		\draw (v2) edge[-latex] node[above=2pt] {$-1$} (v21);
		\draw (v21) edge[-latex] node[above=2pt] {$-1$} (v22);
		\draw (v22) edge[-latex] node[above=2pt] {$1$} (v23);
		\draw (v23) edge[-latex] node[above=2pt] {$1$} (v24);

		\draw (v3) edge[-latex] node[above=2pt] {$-1$} (v31);
		\draw (v31) edge[-latex] node[above=2pt] {$-1$} (v32);
		\draw (v32) edge[-latex] node[above=2pt] {$-1$} (v33);
		\draw (v33) edge[-latex] node[above=2pt] {$1$} (v34);
		\draw (v34) edge[-latex] node[above=2pt] {$1$} (v35);

		\draw (v4) edge[-latex] node[above=2pt] {$-1$} (v41);
		\draw (v41) edge[-latex] node[above=2pt] {$-1$} (v42);
		\draw (v42) edge[-latex] node[above=2pt] {$-1$} (v43);
		\draw (v43) edge[-latex] node[above=2pt] {$-1$} (v44);
		\draw (v44) edge[-latex] node[above=2pt] {$1$} (v45);
		\draw (v45) edge[-latex] node[above=2pt] {$1$} (v46);
	\end{tikzpicture}
	\caption{Tree $T$ used in Example~\ref{ex:morphismExample}.
		The edge weights are shown in black.
		This tree satisfies $\STI$, as every infinite path has a suffix labelled $1^\omega$.
		The value in \red{red} inside each vertex is the value $s(\cdot)$ defined in the proof of Lemma~\ref{lem:embedSmallGraphs}.
		The top vertex~$v_0$ is such that all nonempty finite paths from it have positive weight, so we assume it is the vertex given by Claim~\ref{claim:exists-base-vtx}.
		The tuples in \blue{blue} next to vertices correspond to the morphism to $U$ (for any fixed infinite cardinal $\kappa$) built in the proof of Lemma~\ref{lem:embedSmallGraphs}.}
	\label{fig:morphismExample}
\end{figure}

\begin{example} \label{ex:morphismExample}
	Before proving the $(\kappa, \STI)$-almost-universality of $U$, we give an example of a morphism of a tree into $U$.
	We consider the tree $T$ from Figure~\ref{fig:morphismExample}.
	The \blue{blue} tuples next to each vertex indicate a morphism from $T$ to $U$.
	The morphism given is exactly the one built by our proof of almost-universality below; we refer back to this example during the proof.
	\lipicsEnd
\end{example}

\subparagraph*{Almost-universality of $U$.}
We now prove the following.

\begin{theorem} \label{thm:almost-universal}
The graph $U$ is $(\kappa, \STI)$-almost-universal.
\end{theorem}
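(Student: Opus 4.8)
The plan is to verify the two defining clauses of $(\kappa,\STI)$-almost-universality in turn. First I would show that $U$ satisfies $\STI$. Take any infinite path $u^0 \re{w_0} u^1 \re{w_1} \cdots$ in $U$ and set $S_k = \sum_{i<k} w_i$ and $L_k = |u^k|$. The edge condition gives $L_{k+1} \le L_k + w_k$, so the integer sequence $S_k - L_k$ is non-decreasing; being integer-valued, it either tends to $+\infty$ or is eventually constant. If it tends to $+\infty$, then $S_k \ge S_k - L_k \to +\infty$ (using $L_k \ge 0$), and the path is in $\STI$. Otherwise it is eventually constant, which forces every edge beyond some index to be tight, i.e. $L_{k+1} = L_k + w_k$ and hence $u^k >_\lex u^{k+1}$. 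If $L_k$ did not tend to $+\infty$, some value $m$ would be attained by infinitely many $L_k$; along those indices the tuples lie in $\kappa^m$ and are strictly $>_\lex$-decreasing (by transitivity of $>_\lex$), contradicting the well-foundedness of the lexicographic order on $\kappa^m$. Hence $L_k \to +\infty$ and $S_k = (S_k - L_k) + L_k \to +\infty$. In all cases the labelling lies in $\STI$.

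For the embedding, fix a tree $T$ with $|T|<\kappa$ satisfying $\STI$, rooted at $r$, and for a vertex $v$ write $s(v)$ for the sum of labels on the unique path from $r$ to $v$. First I would locate the start vertex: call $v$ a \emph{reset} if $s(v) < s(v')$ for every proper descendant $v'$ of $v$. A reset must exist, for otherwise every vertex would admit a proper descendant of no-larger $s$-value, and iterating would produce an infinite branch along which the partial sums stay $\le s(r)$, contradicting $T \models \STI$; this is the claim providing $v_0$. I would then work inside $T[v_0]$, where $v_0$ is still a reset. For each $v$ let $a(v)$ be its deepest reset ancestor (it exists, since $v_0$ is an ancestor of everything) and set $n(v) = s(v) - s(a(v)) - 1$; this is $\ge -1$, with equality exactly at resets, and I intend to map $v$ to a tuple $\phi(v)$ of length $n(v)+1$. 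Call an edge $v \re w v'$ \emph{tight} if $v'$ is not a reset, in which case $a(v')=a(v)$ and $n(v')=n(v)+w$, and \emph{loose} otherwise, in which case $n(v')=-1$ while $n(v)+w \ge 0$. In both cases $|\phi(v)| + w \ge |\phi(v')|$, so the length inequality of a $U$-edge holds automatically, with strict inequality on loose edges; only the lexicographic condition on tight edges remains to be secured.

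The remaining and hardest task is to choose the ordinal entries so that $\phi(v) >_\lex \phi(v')$ along every tight edge. Loose edges carry no lexicographic constraint, and tight edges that shorten the tuple I would realise by truncation, making $\phi(v')$ a prefix of $\phi(v)$ so that $\phi(v) >_\lex \phi(v')$ for free; the delicate edges are the length-preserving and length-increasing tight ones, where new coordinates must be appended while keeping the tuple lexicographically below its parent. I would handle this through a well-founded ranking of the tight sub-structure of $T[v_0]$: using $\STI$ one shows that the tight edges organise $T[v_0]$ into blocks carrying essentially a single infinite tight ray out of $v_0$, all other tight descents being well-founded, and then assigns to each vertex, coordinate by coordinate, an ordinal rank encoding its position in this well-order, so that the tuple strictly decreases for $>_\lex$ at each tight step. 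Since $|T|<\kappa$, every such rank is an ordinal $<\kappa$ and $\phi$ lands in $V(U)$. I expect this last step to be the main obstacle: pinning down the correct well-founded ranking (why $\STI$ forbids an infinite strictly-$>_\lex$-decreasing run at a fixed length, and why exactly one infinite tight ray survives) and checking the boundary cases — the short tuples near reset vertices and the coordinate appended at each length-increasing tight edge. The cleanest route is likely to define $\phi$ directly by recursion along tight paths, reading off each new coordinate from the rank of the corresponding subtree, rather than invoking an abstract ordinal labelling after the fact.
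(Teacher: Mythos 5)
Your proof of the first condition (Lemma~\ref{lem:satisfyW}) is complete and correct, and in one respect more careful than the paper's own: the paper dismisses the all-tight tail by invoking ``the well-foundedness of $<_\lex$'', but $<_\lex$ over tuples of varying length is not well-founded (e.g.\ $(1) >_\lex (0,1) >_\lex (0,0,1) >_\lex \cdots$, which is even realisable as an all-tight path in $U$ with weights $1$); your extraction of a subsequence at a fixed length $m$, where lex on $\kappa^m$ is a genuine well-order, and the conclusion $L_k \to \infty$, hence $S_k \to +\infty$, is exactly the right repair. For the second condition, your scaffolding also matches the paper's: your resets are precisely the vertices with $n(v) < 0$ for the paper's potential $n(v) = -\inf\{w(\pi) \mid \pi \text{ non-empty path from } v\}$, your existence argument is the paper's Claim~\ref{claim:v0}, and your $n(v) = s(v) - s(a(v)) - 1$ is a legitimate variant of the paper's $n$ (it telescopes exactly along your tight edges and has slack on loose ones, so $|\phi(v)| + w \ge |\phi(v')|$ holds on all edges, strictly on loose ones, as you say). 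Note that your notion of tight (every edge into a non-reset) constrains more edges than the paper's ($n(v) + w = n(v')$ for the paper's $n$), which is harmless in principle but means you have more lex-decreases to arrange.

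The genuine gap is the step you yourself flag as the main obstacle: you never construct the ordinal coordinates, and the structural picture you propose to base them on is wrong. Tight edges do not organise $T[v_0]$ into blocks carrying ``essentially a single infinite tight ray'': take a ray with weights $+1$ on which every vertex additionally carries a side child reached by weight $-1$ whose subtree then climbs by $+1$ forever; every ray vertex is then a non-reset, and nothing prevents such rays from branching, so a block can contain many (even infinitely many) infinite tight rays. The device that actually works, and which is the heart of the paper's proof, needs no uniqueness: one first shows (Claim~\ref{claim:tightpaths}) that along any infinite tight path $n$ telescopes with the weights, so by $\STI$ it tends to $+\infty$ and each level $k$ is visited finitely often; consequently the graph $T_{v,k}$ of vertices tight-reachable from $v$ with $n \le k$, whose edges contract tight segments with inner vertices of level $> k$, is well-founded and has an ordinal rank $< \kappa$. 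Setting the $k$-th coordinate of $\phi(v)$ to $\rk(T_{v,k})$ then gives the lex decrease on a tight edge $v \re{w} v'$: for $k \le \min\{n(v),n(v')\}$ the graph $T_{v',k}$ is a subgraph of $T_{v,k}$, so ranks are coordinatewise $\ge$, with strictness at coordinate $n(v)$ when $n(v) \le n(v')$ because $v$ belongs to $T_{v,n(v)}$ but not to $T_{v',n(v)}$ (in a tree, $v$ is not tight-reachable from its child $v'$). Your fallback plans do not substitute for this: forcing $\phi(v')$ to be a literal truncation of $\phi(v)$ on length-decreasing tight edges over-constrains the construction (the truncated values were chosen without foresight of the demands of $v'$'s subtree), and a top-down recursion ``reading off each new coordinate from the rank of the corresponding subtree'' is precisely the definition $\rk(T_{v,k})$ --- but which subgraph, at which level, and why it is well-founded are exactly the content still to be supplied.
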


We prove the two conditions for almost-universality in two separate lemmas,
Lemmas~\ref{lem:satisfyW} and~\ref{lem:embedSmallGraphs}.

\begin{lemma}\label{lem:satisfyW}
	The graph $U$ satisfies $\STI$.
\end{lemma}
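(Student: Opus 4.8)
The plan is to show directly that every infinite path in $U$ has partial sums tending to $+\infty$. Fix an infinite path $u^0 \re{w_0} u^1 \re{w_1} u^2 \cdots$ in $U$, and write $S_k = \sum_{i=0}^{k-1} w_i$ for its partial sums and $\ell_k = |u^k|$ for the lengths of the tuples it visits. The definition of $E(U)$ gives $\ell_i + w_i \ge \ell_{i+1}$ on every edge, i.e. $w_i \ge \ell_{i+1} - \ell_i$, and summing this telescoping inequality yields $S_k \ge \ell_k - \ell_0 \ge -\ell_0$. Thus the partial sums are bounded below; the task is to upgrade this to $S_k \to +\infty$.

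Suppose for contradiction that $S_k \not\to +\infty$. Since the $S_k$ are integers bounded below, $m := \liminf_k S_k$ is a finite integer, and there is a threshold $K$ with $S_k \ge m$ for all $k \ge K$ and $S_k = m$ for infinitely many $k$. Let $k_1 < k_2 < \cdots$ enumerate the indices $\ge K$ at which $S_k = m$. For consecutive such indices, summing $w_i \ge \ell_{i+1} - \ell_i$ from $k_j$ to $k_{j+1}-1$ gives $0 = S_{k_{j+1}} - S_{k_j} \ge \ell_{k_{j+1}} - \ell_{k_j}$, so the sequence $(\ell_{k_j})_j$ is non-increasing; being a sequence of naturals, it is eventually equal to some constant $N$.

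I then restrict attention to indices $j$ large enough that $\ell_{k_j} = \ell_{k_{j+1}} = N$. For such $j$, the quantity $\sum_{i=k_j}^{k_{j+1}-1}\bigl(w_i - (\ell_{i+1}-\ell_i)\bigr) = (S_{k_{j+1}} - S_{k_j}) - (\ell_{k_{j+1}} - \ell_{k_j})$ equals $0$, and each summand is nonnegative by the edge condition; hence every summand vanishes, meaning $\ell_i + w_i = \ell_{i+1}$ on every edge of the segment from $u^{k_j}$ to $u^{k_{j+1}}$. By definition of $E(U)$, each such \emph{tight} edge forces a strict lexicographic drop $u^i >_\lex u^{i+1}$; chaining these along the segment gives $u^{k_j} >_\lex u^{k_{j+1}}$. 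Since $u^{k_j}$ and $u^{k_{j+1}}$ both have length $N$, this lexicographic inequality is exactly the order $u^{k_j} > u^{k_{j+1}}$ of $U$. We thus obtain an infinite strictly $>$-decreasing sequence $u^{k_j} > u^{k_{j+1}} > \cdots$, contradicting the well-foundedness of $U$ established in the previous lemma. Therefore $S_k \to +\infty$, and the label of the path lies in $\STI$.

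The main subtlety I anticipate is the final step: the pure lexicographic order on $\bigcup_{n<\omega}\kappa^n$ is not well-founded (for instance $(0,2) >_\lex (0,1) >_\lex (0,0) >_\lex \cdots$), so the strict lex drops along tight edges cannot by themselves produce a contradiction. The point is to isolate the subsequence of minima, where one first controls the lengths to make them constant, so that the comparison takes place within a single well-ordered level $\kappa^N$ of $U$; it is crucial that at these minima both the weight budget and the length are conserved, which is precisely what forces full tightness and hence the strict lexicographic descent between them.
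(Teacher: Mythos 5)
Your proof is correct, and it takes a genuinely different --- and in fact more careful --- route than the paper's at the decisive step. The paper argues directly: it sets $b_i = 0$ on tight edges and $b_i = 1$ otherwise, derives $\sum_{i<k} w_i \ge |u^k| - |u^0| + \sum_{i<k} b_i$, and splits into two cases: either $\sum_i b_i$ diverges (so the partial sums diverge), or all edges are tight from some $i_0$ onwards, in which case the chain $u^{i_0} >_\lex u^{i_0+1} >_\lex \cdots$ is declared to contradict ``the well-foundedness of $<_\lex$''. You instead suppose $S_k \not\to +\infty$, pass to the subsequence of indices achieving the finite liminf $m$, force the lengths along that subsequence to stabilize at some $N$, conclude by a sum-of-nonnegatives argument that the segments between consecutive minima are entirely tight, and land the lexicographic descent inside the fixed-length level $\kappa^N$, where it contradicts the well-order on $U$ from the preceding lemma. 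What your detour buys is rigour precisely where the paper is too terse: as your closing paragraph observes, $<_\lex$ on variable-length tuples is \emph{not} well-founded, and moreover the paper's ``contradictory'' second case genuinely occurs --- the all-tight path $(1) \re{1} (0,1) \re{1} (0,0,1) \re{1} \cdots$ with every weight $1$ lives in $U$ and has all $b_i = 0$ (it satisfies $\STI$ because the lengths, hence the partial sums, diverge, but it yields an infinite $>_\lex$-descent). So the paper's dichotomy needs exactly your length-stabilization step (equivalently: in an all-tight tail, $\sum w_i$ tracks $|u^k|$ exactly, so either the lengths tend to $+\infty$ and we are done, or some length recurs infinitely often and descent within the well-ordered level $\kappa^N$ is impossible); your liminf bookkeeping is slightly heavier than the paper's $b_i$ counting, but it supplies this missing step. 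One nitpick: your illustrative chain $(0,2) >_\lex (0,1) >_\lex (0,0) >_\lex \cdots$ cannot actually be continued infinitely (below $(0,0)$ only the prefixes $(0)$ and $()$ remain); the witness of non-well-foundedness you want is $(1) >_\lex (0,1) >_\lex (0,0,1) >_\lex \cdots$, i.e., the chain realized by the path above.
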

\begin{proof}
	Take an infinite path $u^0 \re{w_0} u^1 \re{w_1} \dots$ in $U$.
	For all $i$, let
	\[
	b_i = \begin{cases}
		0 &\text{if } |u^i|+ w_i= |u^{i+1}|, \\
		1 &\text{otherwise.}
	\end{cases}
	\]
	For all $i$, we have $|u^i|+w_i \geq |u^{i+1}|+b_i$.
	Therefore, for all $k$,
	\[
	\sum_{i=0}^{k-1}w_i \geq |u^{k}| - |u^{0}| + \sum_{i=0}^{k-1} b_i \geq \sum_{i=0}^{k-1} b_i - |u^{0}|.
	\]
	If $\sum_{i=0}^{k-1} b_i$ goes to $+\infty$ then, by the above, it also holds that $\sum_{i=0}^{k-1} w_i$ goes to $+\infty$, as wanted.

	So we assume otherwise: there is $i_0$ such that for all $i \geq i_0$, $b_i=0$.
	Then, for all $i \geq i_0$, we have $|u^{i}| + w_i = |u^{i+1}|$ and thus $u^i >_\lex u^{i+1}$.
	Consider the sequence $(|u^i|)_{i \geq i_0}$.
	We show that it converges to $+\infty$.
	Assume for contradiction that it does not: then, there is a sequence $j_0 < j_1 < j_2 < \dots$, with $j_0 \geq i_0$, such that the subsequence $(|u^{j_k}|)_{k \in \N}$ is constant, say with value~$n$.
	By transitivity of $>_\lex$, we have $u^{j_0} >_\lex u^{j_1} >_\lex \dots$.
	However, the set of tuples of length~$n$ is well-ordered by $\le_\lex$, so this is a contradiction.
	We conclude that $|u^i| \to +\infty$.
	Thus, for $k \ge i_0$, $\sum_{i=i_0}^{k-1} w_i = |u^{k}| - |u^{i_0}|$ converges to $+\infty$.
	This extends to $\sum_{i=0}^{k-1} w_i$ converging to~$+\infty$, as wanted.
\end{proof}

\begin{lemma}\label{lem:embedSmallGraphs}
    For all trees $T$ of cardinality $< \kappa$ satisfying $\STI$, there exists a vertex $v_0$ of $T$ such that $T[v_0] \to U$.
\end{lemma}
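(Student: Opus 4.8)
The plan is to isolate a single vertex $v_0$ from which the running sum can never fall back, and then to send each vertex of $T[v_0]$ to a tuple whose length records, in an order-reversing fashion, how far the sum may still decrease in the future. The length of $f(v)$ will be the reserve needed to absorb the negative edges lying ahead of $v$, and the ordinal coordinates will certify, lexicographically, the progress made along the path.

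To this end I would work with the quantity
\[ n(v) = -\inf\{\, \mathrm{sum}(p) : p \text{ is a nonempty finite path from } v \text{ in } T \,\}, \]
where $\mathrm{sum}(p)$ is the sum of the weights along $p$. The first step (Claim~\ref{claim:v0}) is to produce a vertex $v_0$ with $n(v_0) < 0$. If no such vertex existed, then from every vertex $u$ there would be a nonempty path of sum $\le 0$; concatenating such paths starting from the root would yield an infinite path whose partial sums return to a value $\le 0$ infinitely often, contradicting that $T$ satisfies $\STI$. The defining property of $v_0$ --- every nonempty path leaving it has sum $\ge 1$ --- then guarantees that $n$ is finite on all of $T[v_0]$, and it yields the key monotonicity estimate $n(v') \le n(v) + w$ for every edge $v \re w v'$ of $T[v_0]$ (prepending the edge to the paths from $v'$ witnesses this).

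I would then define a morphism $f\colon T[v_0]\to U$ with tuple lengths $|f(v)| = \max(0, n(v))$; in particular $f(v_0) = ()$ and all lengths are non-negative. The length requirement $|f(v)| + w \ge |f(v')|$ of $U$ follows from $n(v') \le n(v) + w$ by a short case distinction on the sign of $w$, the point being that a negative edge can only leave a vertex with $n(v) > 0$ (the edge itself is a nonempty path of negative sum), so that enough length is in reserve to absorb it. In particular the edges out of $v_0$ turn out to be non-tight, which is precisely what makes sending the root to the empty tuple harmless.

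The heart of the argument --- Claims~\ref{claim:tightpaths} and~\ref{claim:morphism} --- is to choose the ordinal coordinates so that the remaining, \emph{tight} edges (those with $|f(v)| + w = |f(v')|$) satisfy the lexicographic condition $f(v) >_\lex f(v')$ of $U$. Tight edges that shorten the tuple are sent to a prefix of $f(v)$ and are automatically fine; the real difficulty is the tight edges that preserve or increase the length, where I must strictly decrease an earlier coordinate while extending the tuple. I would build $f$ coordinate by coordinate along the tight paths issued from $v_0$, assigning as coordinate values ordinals $< \kappa$ (available since $|T| < \kappa$ bounds the branching) both to separate the possibly infinitely many tight successors of a vertex and to leave room for later decreases. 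The crux, which I expect to be the main obstacle, is the structural analysis of tight paths: I anticipate showing that from $v_0$ essentially a single tight path is infinite, while every other tight path eventually takes a non-tight ``reset'' edge down to a vertex of strictly smaller length, so that the tight structure carries a well-founded rank. This is what makes the coordinate assignment well-defined and forces the lexicographic decrease on every tight edge. Granting this structural step, both conditions defining a morphism into $U$ hold edgewise, and the lemma follows.
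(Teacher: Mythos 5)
Your skeleton matches the paper's (the quantity $n(v)$, Claim~\ref{claim:v0} with the same concatenation argument, the estimate $n(v')\le n(v)+w$, the tight/non-tight split, and a well-founded rank on the tight structure), but two load-bearing steps are wrong or missing. First, the tuple length $|f(v)|=\max(0,n(v))$ is off by one and genuinely breaks: consider the ray $x_0 \re{1} y_0 \re{0} y'_0 \re{0} x_1 \re{1} y_1 \re{0} y'_1 \re{0} x_2 \re{1} \cdots$ with period $1,0,0$, which satisfies $\STI$. One computes $n(x_i)=-1$ and $n(y_i)=n(y'_i)=0$, so under your assignment each edge $y_i \re{0} y'_i$ joins two vertices mapped to the \emph{empty} tuple with $w=0$; the edge is tight in your sense and the morphism condition demands $()>_\lex()$, which is unsatisfiable no matter what coordinates you choose. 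Since every vertex with $n<0$ is some $x_i$ and its subtree contains such an edge, no choice of $v_0$ rescues this. The paper instead takes $|\phi(v)|=\max(n(v)+1,0)$, so vertices with $n=0$ receive a length-$1$ tuple in which a strict rank decrease can be recorded; this $+1$ is essential (and it also kills your claim that this choice is what makes the root's empty tuple harmless --- with the paper's lengths, all edges into vertices with $n(v')<0$ are automatically non-tight, which is the actual reason).

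Second, your anticipated crux --- that from $v_0$ ``essentially a single tight path is infinite'' while all others take a non-tight reset edge --- is false, so the plan built on it does not go through. For a counterexample, hang below $v_0$ (via a weight-$2$ edge) a full binary tree with all weights $1$, where each binary vertex at depth $d$ additionally starts a side ray with weights $-(d+1),1,1,\ldots$; this tree satisfies $\STI$, one gets $n=d+1$ on the binary vertices and $n(v_0)=-1$, every binary edge (and the edge from $v_0$) is tight, and there are continuum many infinite tight paths from $v_0$, none of which resets. The correct structural fact (the paper's Claim~\ref{claim:tightpaths}) is weaker: along every infinite tight path, $n\to+\infty$, so for each threshold $k$ the vertices with $n\le k$ that are tight-reachable from $v$ form a well-founded forest $T_{v,k}$. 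The paper then defines the $k$-th coordinate of $\phi(v)$ as the ordinal rank $\rk(T_{v,k})$, and a tight edge $v \re{w} v'$ yields $\phi(v)>_\lex\phi(v')$ because ranks are non-increasing in every common coordinate and strictly decrease at coordinate $n(v)$ when $n(v)\le n(v')$ (as $v$ is the root of $T_{v,n(v)}$ but absent from $T_{v',n(v)}$). Your proposal gestures at a rank but supplies neither this stratification by $k$ nor the rank-comparison argument, and these are precisely the heart of the proof.
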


\begin{proof}
Let $T$ be a tree of cardinality $< \kappa$ satisfying $\STI$.
Given a finite path $\pi$, its \emph{weight}, denoted $w(\pi)$, is the sum of the edge weights appearing along $\pi$.
We start with the following claim.

\begin{claim}\label{claim:exists-base-vtx}
	There exists a vertex from which every nonempty finite path has weight $>0$.
\end{claim}

\begin{claimproof}
	Suppose for contradiction that this is not the case.
	Then starting from the root $t_0$ of~$T$, we get a proper descendant $t_1$ of $t_0$ such that the path from $t_0$ to $t_1$ has weight~$\leq 0$.
	Recursively, after choosing $t_n$, choose a proper descendant $t_{n+1}$ of $t_n$ such that the path from $t_n$ to $t_{n+1}$ has weight $\leq 0$.
	Iterating, we build an infinite branch of $T$ such that at each $t_n$, the accumulated weight of the path from $t_0$ is $\leq 0$.
	This contradicts that $T$ satisfies $\STI$.
\end{claimproof}

Fix a vertex $v_0$ such that, as given by Claim~\ref{claim:exists-base-vtx}, every nonempty finite path from $v_0$ has positive weight.
We construct a morphism $\phi$ from $T[v_0]$ to $U$.
Define $s(v_0)=0$ and, for every proper descendant $v$ of $v_0$, let $s(v)>0$ be the weight of the path from $v_0$ to $v$.

For every $v \in T[v_0]$, we set the length of $\phi(v)$ to be $s(v)+1$, i.e., we have $\phi(v)=(\phi_0(v),\dots,\phi_{s(v)}(v))$, where the ordinal coordinates $\phi_k(v)$ will be defined below.
Note that this entails that for every edge $v \re{w} v'$ in $T[v_0]$, we have $|\phi(v)| + w = |\phi(v')|$.
We still have to choose the ordinal coordinates $\phi_k(v)$ so that $\phi(v) >_{\lex} \phi(v')$.

To this end, we proceed as follows.
Let $v \in T[v_0]$ and let $k \leq s(v)$.
Intuitively, the ordinal~$\phi_k(v)$ will record ``how many times'' (as an ordinal), at most, vertices $v'$ such that $s(v') \leq k$ can be visited along a path from~$v$.

Formally, consider the forest $F_{k,v}$ (a \emph{forest} is a disjoint union of trees)
\begin{itemize}
	\item the vertices of which are the descendants $x$ of $v$, possibly including $v$ itself, such that $s(x) \leq k$, and
	\item where there is an edge $x \to y$ in $F_{k,v}$ whenever $y$ is a proper descendant of $x$ such that the unique path from $x$ to $y$ contains only internal vertices $z$ such that $s(z)>k$.
	Possibly, the path from $x$ to $y$ consists of a single edge, in which case it does not contain any internal vertex.
\end{itemize}
Observe that $v$ is a vertex of~$F_{k,v}$ if and only if $k = s(v)$ (in which case $F_{k,v}$ contains a single tree, with root $v$).

The forest $F_{k,v}$ is well-founded (i.e., it does not have infinite branches).
Indeed, an infinite path in $F_{k,v}$ would expand to an infinite branch of $T[v_0]$ containing infinitely many vertices $x$ such that $s(x)\le k$.
However, this contradicts that $T[v_0]$ satisfies $\STI$, because $s(x)$ is the accumulated weight from $v_0$ and tends to $+\infty$.

Define the rank $r(x)$ of a vertex in a well-founded forest to be $0$ for leaves and the ordinal
\[
	\sup\{r(y)+1 \mid x \to y\}
\]
for non-leaves.
The rank $r(F)$ of a forest $F$ is then
\[
	\sup\{r(x)+1 \mid x \text{ a vertex of } F\},
\] with the convention that $\sup \emptyset = 0$ (so the empty forest has rank $0$). Equivalently, this is the rank of an artificial root added to the forest, with edges to the roots of all trees in the forest.
Then, let
\[
	\phi_k(v) = r(F_{k,v}).
\]

We check that the map $\phi$ indeed takes values in $V(U) = \bigcup_{n<\omega} \kappa^n$.
Equivalently, we check that $\phi_k(v) < \kappa$ for all $v$ and $k$.
Since $T$ has cardinality $<\kappa$, the forest $F_{k,v}$ has cardinality $<\kappa$ and thus rank $<\kappa$.

\begin{example}
The construction of $\phi$ is illustrated in Example~\ref{ex:morphismExample} and Figure~\ref{fig:morphismExample}.
We explain the construction of $\phi$ for the vertex~$v_1$ such that $\phi(v_1) = (0, 1, 4)$.
We have $s(v_1) = 2$, so $\phi(v_1)$ has length~$3$ and $\phi_k(v_1)$ must be defined for $k = 0, 1, 2$.
We have $\phi_0(v_1) = r(F_{0,v_1}) = 0$ since $F_{0,v_1}$ is empty: there is no descendant $x$ of $v_1$ with $s(x) \le 0$.
We have $\phi_1(v_1) = r(F_{1,v_1}) = 1$ since $F_{1,v_1}$ contains infinitely many isolated vertices, namely the descendants $x$ of $v_1$ with $s(x) = 1$.
Each such vertex is a leaf of rank~$0$, so the rank of $F_{1,v_1}$ is~$1$.
Finally, we have $\phi_2(v_1) = r(F_{2,v_1}) = 4$. The forest $F_{2,v_1}$ is a tree with root~$v_1$. On each horizontal branch of~$T$, the vertices $x$ such that $s(x) \le 2$ form a three-vertex path with the pattern $\red{2} \re{-1} \red{1} \re{1} \red{2}$ in~$T$. Hence, the root $v_1$ has rank~$3$ and the rank of $F_{2,v_1}$ is~$4$.\lipicsEnd
\end{example}

\begin{claim}
	The map $\phi\colon V(T[v_0]) \to V(U)$ defines a morphism from $T[v_0]$ to $U$.
\end{claim}

\begin{claimproof}
	Consider an edge $v \re{w} v'$ in $T[v_0]$.
	We show that $\phi(v) \re{w} \phi(v')$ is an edge in $U$.
	Since $|\phi(v)|+w=|\phi(v')|$, we must prove that $\phi(v)>_{\lex} \phi(v')$.

	We will use the following property: for every $k < s(v)$ such that $k \leq s(v')$, observe that $F_{k,v'}$ is a union of connected components of $F_{k,v}$, since $s(v)>k$.
	Therefore, $\phi_k(v) = r(F_{k,v}) \ge r(F_{k,v'}) = \phi_k(v')$ in that case.

	We distinguish two cases.
	\begin{itemize}
		\item If $s(v)>s(v')$ then, by the above observation, $\phi_k(v)\ge\phi_k(v')$ for all $0\le k\le s(v')$.
		Then, either there is $k \le s(v')$ such that $\phi_k(v)>\phi_k(v')$, or $\phi_k(v)=\phi_k(v')$ for all $0\le k\le s(v')$, in which case $\phi(v')$ is a proper prefix of $\phi(v)$.
		In either case, $\phi(v)>_{\lex} \phi(v')$.
		\item Otherwise, $s(v) \leq s(v')$.
		Consider $k = s(v)$.
		The forest $F_{k,v}$ contains $v$ as a root.
		The roots (or the single root if $k = s(v')$) of $F_{k,v'}$ are children of $v$ in $F_{k,v}$.
		Hence, the rank of $v$ in $F_{k,v}$ is strictly larger than the rank of any root of $F_{k,v'}$.
		It follows that $\phi_k(v) = r(F_{k,v}) > r(F_{k,v'}) = \phi_k(v')$.

		Combining it with the fact that $\phi_k(v)\ge \phi_k(v')$ for all $0\le k< s(v)$, we deduce that $\phi(v)>_{\lex} \phi(v')$.\claimqedhere
	\end{itemize}
\end{claimproof}

This proves Lemma~\ref{lem:embedSmallGraphs}.
\end{proof}

Lemmas~\ref{lem:satisfyW} and~\ref{lem:embedSmallGraphs}
prove that $U$ is $(\kappa,\STI)$-almost-universal,
and hence prove Theorem~\ref{thm:almost-universal}.

Together with Lemma~\ref{lem:well-ordered-monotone}, this gives, for every cardinal $\kappa$,
a well-ordered monotone $(\kappa,\STI)$-almost-universal graph.
Lemma~\ref{lem:almost-universal-positional} therefore implies that $\STI$ is positional.
Section~\ref{sec:completeness} establishes its
$\bpi{3}$-completeness, completing the proof of
Theorem~\ref{thm:main}.

\bibliography{bib}

\begin{thebibliography}{10}

\bibitem{BIPTV24}
Sougata Bose, Rasmus Ibsen{-}Jensen, David Purser, Patrick Totzke, and Pierre Vandenhove.
\newblock The power of counting steps in quantitative games.
\newblock In {\em {CONCUR} 2024}, volume 311 of {\em LIPIcs}, pages 13:1--13:18. Schloss Dagstuhl -- Leibniz-Zentrum f{\"{u}}r Informatik, 2024.
\newblock \href {https://doi.org/10.4230/LIPIcs.CONCUR.2024.13} {\path{doi:10.4230/LIPIcs.CONCUR.2024.13}}.

\bibitem{BFRV23}
Patricia Bouyer, Nathana{\"{e}}l Fijalkow, Mickael Randour, and Pierre Vandenhove.
\newblock How to play optimally for regular objectives?
\newblock In {\em {ICALP} 2023}, volume 261 of {\em LIPIcs}, pages 118:1--118:18. Schloss Dagstuhl -- Leibniz-Zentrum f{\"{u}}r Informatik, 2023.
\newblock \href {https://doi.org/10.4230/LIPIcs.ICALP.2023.118} {\path{doi:10.4230/LIPIcs.ICALP.2023.118}}.

\bibitem{BL69Definability}
J.~Richard B{\"{u}}chi and Lawrence~H. Landweber.
\newblock Definability in the monadic second-order theory of successor.
\newblock {\em {J. Symb. Log.}}, 34(2):166--170, 1969.
\newblock \href {https://doi.org/10.2307/2271090} {\path{doi:10.2307/2271090}}.

\bibitem{CO26}
Antonio Casares and Pierre Ohlmann.
\newblock Positional {\(\omega\)}-regular languages.
\newblock {\em TheoretiCS}, 5, 2026.
\newblock \href {https://doi.org/10.46298/TheoretiCS.26.5} {\path{doi:10.46298/TheoretiCS.26.5}}.

\bibitem{COV26PositionalityMemory}
Antonio Casares, Pierre Ohlmann, and Pierre Vandenhove.
\newblock Positionality and memory.
\newblock In Nathanaël Fijalkow, editor, {\em Games on Graphs: From Logic and Automata to Algorithms}, pages 112--157. Cambridge University Press, Cambridge, 2026.
\newblock \href {https://doi.org/10.1017/9781009500678.006} {\path{doi:10.1017/9781009500678.006}}.

\bibitem{CFH24}
Thomas Colcombet, Nathana{\"{e}}l Fijalkow, and Florian Horn.
\newblock Playing safe, ten years later.
\newblock {\em Log. Methods Comput. Sci.}, 20(1), 2024.
\newblock \href {https://doi.org/10.46298/LMCS-20(1:10)2024} {\path{doi:10.46298/LMCS-20(1:10)2024}}.

\bibitem{NathBook}
Nathanaël Fijalkow, C.~Aiswarya, Guy Avni, Nathalie Bertrand, Patricia Bouyer, Romain Brenguier, Arnaud Carayol, Antonio Casares, John Fearnley, Paul Gastin, Hugo Gimbert, Thomas~A. Henzinger, Florian Horn, Rasmus Ibsen-Jensen, Nicolas Markey, Benjamin Monmege, Petr Novotný, Pierre Ohlmann, Mickael Randour, Ocan Sankur, Sylvain Schmitz, Olivier Serre, Mateusz Skomra, Nathalie Sznajder, and Pierre Vandenhove.
\newblock {\em Games on Graphs: From Logic and Automata to Algorithms}.
\newblock Cambridge University Press, Cambridge, 2026.
\newblock \href {https://doi.org/10.1017/9781009500678} {\path{doi:10.1017/9781009500678}}.

\bibitem{GZ04}
Hugo Gimbert and Wies{\l}aw Zielonka.
\newblock When can you play positionally?
\newblock In {\em MFCS 2004}, volume 3153 of {\em Lecture Notes in Computer Science}, pages 686--697. Springer, 2004.
\newblock \href {https://doi.org/10.1007/978-3-540-28629-5_53} {\path{doi:10.1007/978-3-540-28629-5_53}}.

\bibitem{kechris1995Descriptive}
Alexander~S. Kechris.
\newblock {\em Classical Descriptive Set Theory}.
\newblock Graduate Texts in Mathematics. Springer New York, 1995.
\newblock \href {https://doi.org/10.1007/978-1-4612-4190-4} {\path{doi:10.1007/978-1-4612-4190-4}}.

\bibitem{Ohlmann23}
Pierre Ohlmann.
\newblock {Characterizing Positionality in Games of Infinite Duration over Infinite Graphs}.
\newblock {\em {TheoretiCS}}, 2, January 2023.
\newblock \href {https://doi.org/10.46298/TheoretiCS.23.3} {\path{doi:10.46298/TheoretiCS.23.3}}.

\bibitem{OS26}
Pierre Ohlmann and Micha{\l} Skrzypczak.
\newblock Positionality in {\textdollar}{\(\Sigma\)}{\_}0{\^{}}2{\textdollar} and a completeness result.
\newblock {\em Log. Methods Comput. Sci.}, 22(2), 2026.
\newblock \href {https://doi.org/10.46298/LMCS-22(2:10)2026} {\path{doi:10.46298/LMCS-22(2:10)2026}}.

\end{thebibliography}

\end{document}